\theoremstyle{remark}
\theoremstyle{remark}
\newtheorem{remark}{Remark}
\theoremstyle{remark}
\theoremstyle{remark}
\newtheorem{definition}{Definition}
\theoremstyle{remark}
\theoremstyle{remark}
\newtheorem{lemma}{Lemma}
\theoremstyle{remark}
\begin{document}
\title{Data-Driven Inverse Optimal Control for Continuous-Time Nonlinear Systems
}

\author{Hamed Jabbari Asl$^{\ast}$ and Eiji Uchibe
\thanks{$^{\ast}$The authors are with the Department of Brain Robot Interface, ATR Computational Neuroscience Laboratories, 2-2-2 Hikaridai, Seikacho, Soraku-gun, Kyoto 619-0288, Japan (e-mails: \texttt{\{hjabbari, uchibe\}@atr.jp}).}
\thanks{This work has been submitted to the IEEE for possible publication.
Copyright may be transferred without notice, after which this version may
no longer be accessible.}
}
\date{}

\maketitle
\begin{abstract}
This paper introduces a novel model-free and a partially model-free algorithm for inverse optimal control (IOC), also known as inverse reinforcement learning (IRL), aimed at estimating the cost function of continuous-time nonlinear deterministic systems. Using the input-state trajectories of an expert agent, the proposed algorithms separately utilize control policy information and the Hamilton-Jacobi-Bellman equation to estimate different sets of cost function parameters. This approach allows the algorithms to achieve broader applicability while maintaining a model-free framework. Also, the model-free algorithm reduces complexity compared to existing methods, as it requires solving a forward optimal control problem only once during initialization. Furthermore, in our partially model-free algorithm, this step can be bypassed entirely for systems with known input dynamics. Simulation results demonstrate the effectiveness and efficiency of our algorithms, highlighting their potential for real-world deployment in autonomous systems and robotics.

\textbf{\textit{Keywords}:} Inverse optimal control, inverse reinforcement learning, data-driven solution, model-free, nonlinear systems\\
\end{abstract}
\section{Introduction}
Inverse optimal control (IOC), often referred to as inverse reinforcement learning (IRL) when approached through data-driven methods, is a powerful framework
used in the field of control theory and machine learning
to infer the underlying cost functions or objectives that govern the observed behaviors of a system or agent. Identifying these objectives is critical across various domains, including autonomous robotics \cite{uchibe2020imitation}, human-robot interaction \cite{kollmitz2020learning}, and multi-agent systems \cite{golmisheh2024optimal}. 
By uncovering the implicit goals driving an agent's behavior, we can enhance decision-making processes, optimize system performance, and enable robots to navigate and operate more effectively in complex, dynamic environments.

In practical scenarios, designing an objective function for an autonomous agent often involves iterative optimization processes, commonly referred to as reward engineering. This process is labor-intensive and may fail to capture the nuanced behaviors exhibited by expert agents. Leveraging the estimated objective functions of expert agents, therefore, becomes crucial for adapting objectives in other autonomous systems \cite{silver2021reward}.

In the realm of computer science, IRL techniques predominantly rely on substantial training data to estimate reward functions in stochastic environments \cite{abbeel2004apprenticeship, abbeel2005exploration, ratliff2006maximum}. However, IOC/IRL methods developed in the control systems community often operate with limited data, typically estimating the cost function from a single demonstration in deterministic systems \cite{kamalapurkar2018linear}.

IOC/IRL methods for deterministic systems, which primarily address the estimation of cost functions that penalize both states and inputs, vary in their methodologies and assumptions. Some rely solely on data from expert demonstrations, circumventing the need for direct interaction between the learner and the environment. These methods, whether model-based \cite{kamalapurkar2018linear, self2019online, SELF2022110242} or partially model-based \cite{Asl_IRL22}, require knowledge of the system's input dynamics. Consequently, challenges arise in effectively estimating input dynamics when they are not known in advance. Alternatively, model-free approaches have emerged, requiring direct interaction between the learner and the environment to gather data from both parties \cite{xue2021inverse, lian2021robust, perrusquia2022complementary}. While these methods offer flexibility, they are bi-level approaches that often demand complex solutions, involving the nested resolution of both forward and inverse optimal control problems. Additionally, they might not be applicable to multi-dimensional input systems.

In \cite{ASL_NN2024}, we proposed a model-free approach that eliminates the need for solving the forward problem at each iteration, thereby reducing complexity and facilitating its use in online applications where expert trajectories dissipate rapidly. However, similar to \cite{kamalapurkar2018linear, self2019online, SELF2022110242}, in this method a finite informativity condition, that is necessary for the solution, may not be satisfied for certain systems. An example of such systems includes a class of linear systems with quadratic cost functions where the penalty weight matrices are not diagonal. Although the model-free approach presented in \cite{perrusquia2023drone} does not have this limitation and, unlike \cite{xue2021inverse, lian2021robust, perrusquia2022complementary}, can be applied to multi-dimensional input systems, it still necessitates solving the forward problem in a nested manner, similar to \cite{xue2021inverse, lian2021robust, perrusquia2022complementary}.

In this paper, we present a model-free and a partially model-free IRL/IOC algorithm for estimating the cost function of input-affine continuous-time nonlinear deterministic systems. The proposed approaches aim to address the finite informativity issue present in the previous works---thereby extending the applicability of our methods to a broader class of input-affine systems---and to reduce the complexity found in the bi-level model-free methods. These goals are achieved by estimating a subset of unknown parameters using policy information and then estimating the remaining parameters utilizing the Hamilton-Jacobi-Bellman (HJB) equation after the convergence of the first set. This approach differs from those in \cite{kamalapurkar2018linear, SELF2022110242, ASL_NN2024}, which estimate all the parameters simultaneously. The model-free approach utilizes both expert and learner agent trajectories, while the partially model-free approach, which is a special case of the model-free approach for systems with known input dynamics, relies solely on the expert's trajectories. The algorithms were evaluated through numerical simulations in MATLAB under both noise-free and noisy signal conditions, as well as in a physically realistic MuJoCo environment, with results indicating satisfactory performance across these settings. The model-free algorithm proposed in this paper is an extension of our previous design for linear systems \cite{ASL_ESWA2024}, adapted for nonlinear systems.

The main contributions of this study, in comparison to existing works in the literature, are as follows:\begin{enumerate}
\item In contrast to the designs in \cite{kamalapurkar2018linear, self2019online, SELF2022110242, ASL_NN2024}, which may not satisfy the required finite informativity condition for certain input-affine systems, our proposed methods are applicable to a broader class of such systems.
\item Unlike the bi-level methods in \cite{xue2021inverse, lian2021robust, perrusquia2022complementary, perrusquia2023drone}, our algorithms avoid the need to solve a forward optimal control or reinforcement learning problem after each cost function update. The forward problem is only necessary during the initialization phase of the model-free algorithm.
\item Compared to the approaches in \cite{xue2021inverse, lian2021robust, perrusquia2022complementary}, our methods are suitable for multi-dimensional input systems that require the estimation of input-penalty weights.
\item In contrast to \cite{xue2021inverse, lian2021robust, lian2021online, lian2022inverse, perrusquia2022complementary, lian2024inverse}, our algorithms do not apply the updated policies within the estimation process to the system at each iteration, which helps mitigate concerns about the stability of these updated policies.
\end{enumerate}

\section{Inverse optimal control problem definition}
We have considered an expert system with the following
input-affine continuous-time nonlinear dynamic model:
\begin{align}
\dot{\mathbf{x}}_e(t)=\mathbf{f}(\mathbf{x}_e)+\mathbf{g}(\mathbf{x}_e)\mathbf{u}_e(t),
\label{system}
\end{align}
where $\mathbf{x}_e\in\mathbb{R}^{n}$ is the state vector, $\mathbf{u}_e\in\mathbb{R}^{m}$ is the input vector, $\mathbf{f}(\cdot)\in\mathbb{R}^{n}$ represents drift dynamics, and $\mathbf{g}(\cdot)\in\mathbb{R}^{n\times m}$ is input dynamics, which is assumed that its transpose can be represented as follows:
\begin{equation}
\mathbf{g}(\mathbf{x}_e)^{\top}=\mathbf{W}_g\boldsymbol{\sigma}_g(\mathbf{x}_e)+\boldsymbol{\epsilon}_g,
\label{g_NN}
\end{equation}
where $\mathbf{W}_g\in\mathbb{R}^{m\times \mathcal{L}_g}$ is a constant matrix for some positive integer $\mathcal{L}_g$, $\boldsymbol{\sigma}_g\in\mathbb{R}^{\mathcal{L}_g \times n}$ is the known continuously differentiable basis matrix with locally Lipschitz continuous gradien, and $\boldsymbol{\epsilon}_g\in\mathbb{R}^{m\times n}$ is the functional reconstruction error. We have also considered a learner system with the same dynamic model as the expert system, given by $\dot{\mathbf{x}}_l(t)=\mathbf{f}(\mathbf{x}_l)+\mathbf{g}(\mathbf{x}_l)\mathbf{u}_l(t)$, where $\mathbf{x}_l\in\mathbb{R}^{n}$ and $\mathbf{u}_l\in\mathbb{R}^{m}$ are its state and input vectors, respectively.

It is assumed that the expert agent executes a policy that minimizes the following cost:
\begin{equation}
J(\mathbf{x}_e(0),\mathbf{u}_e)\triangleq\int_0^{\infty}[Q(\mathbf{x}_e(t))+\mathbf{u}_e(t)^{\top}\mathbf{R}\mathbf{u}_e(t)]\text{d}t,
\label{J}
\end{equation}
where $Q(\mathbf{x}_e)\in\mathbb{R}$ is a continuously differentiable and positive-semidefinite function that penalizes state trajectories and $\mathbf{R}\in\mathbb{R}^{m\times m}$ is a positive-definite matrix, which determines input penalties. It is assumed that $Q(\mathbf{x}_e)$ can be represented as
\begin{equation}
Q(\mathbf{x}_e)=\mathbf{W}_Q^{\top}\boldsymbol{\sigma}_Q(\mathbf{x}_e)+\epsilon_Q,
\label{Q_NN}
\end{equation}
where $\boldsymbol{\sigma}_Q\in\mathbb{R}^{\mathcal{L}_Q}$ is the known basis vector for some positive integer $\mathcal{L}_Q$, $\mathbf{W}_Q\in\mathbb{R}^{\mathcal{L}_Q}$ is the weight vector, and $\epsilon_Q\in\mathbb{R}$ is the functional reconstruction error.

The objective of the IOC/IRL problem is to estimate $Q(\cdot)$ and $\mathbf{R}$ of the cost function (\ref{J}). It is widely recognized in the literature that this problem does not have a unique solution. However, any solution is considered acceptable as long as they enable the retrieval of the expert's optimal policy. Consequently, the estimated terms may differ from the true ones, resulting in what are commonly referred to as equivalent cost functions.

Under agent's optimal decisions, i.e., when $\mathbf{u}_e$ minimizes (\ref{J}), the states and optimal inputs satisfy the following Hamilton–Jacobi–Bellman (HJB) equation \cite{kamalapurkar2018linear}:
\begin{align}
H(\mathbf{x}_e,\mathbf{u}_e,\nabla V)
&\triangleq Q(\mathbf{x}_e)+\mathbf{u}_e^{\top}\mathbf{R}\mathbf{u}_e+\nabla V^{\top}\dot{\mathbf{x}}_e(t)=0,
\label{HJB_equation}
\end{align}
where $\nabla$ denotes the gradient with respect to $\mathbf{x}_e$ and $V\in\mathbb{R}$ is the optimal value function defined as follows:
\begin{equation}
V(\mathbf{x}_e(t))\triangleq\min_{\mathbf{u}_e}\int_t^{\infty}[Q(\mathbf{x}_e(\tau))+\mathbf{u}_e(\tau)^{\top}\mathbf{R}\mathbf{u}_e(\tau)]\text{d}\tau.
\end{equation}
The value function can be represented in parametric form as
\begin{equation}
V=\mathbf{W}_V^{\top}\boldsymbol{\sigma}_V(\mathbf{x}_e)+\epsilon_V,
\label{V_NN}
\end{equation}
where $\mathbf{W}_V\in\mathbb{R}^{\mathcal{L}_V}$ is the weight vector for some positive integer $\mathcal{L}_V$, $\boldsymbol{\sigma}_V\in\mathbb{R}^{\mathcal{L}_V}$ is the known continuously differentiable basis vector with locally Lipschitz continuous gradien, and $\epsilon_V\in\mathbb{R}$ is the error function \cite{modares2013adaptive}.

The closed-form equation of the optimal controller that minimizes (\ref{J}) can be written as follows \cite{asl2023online}:
\begin{equation}
\mathbf{u}_e
=-\frac{1}{2}\mathbf{R}^{-1}\mathbf{g}(\mathbf{x}_e)^{\top}\nabla V.
\label{u_e}
\end{equation}

The HJB equation (\ref{HJB_equation}) and the closed-form policy equation (\ref{u_e}) are two key relations commonly used in the literature to solve the IOC/IRL problem. The latter provides specific information unique to input-affine deterministic systems, enabling solutions with features that differ from those developed for stochastic systems. Different methods of using these relations have led to various solutions.

The approaches in \cite{kamalapurkar2018linear, self2019online, SELF2022110242, self2020online, Asl_IRL22} derive two performance metrics using (\ref{HJB_equation}) and (\ref{u_e}). While these methods benefit from not requiring the interaction of the learner system with the environment and rely solely on the expert's data, they require knowledge of the input dynamics $\mathbf{g}(\cdot)$, as seen in (\ref{u_e}). Furthermore, these methods are applicable only to a limited class of systems. Specifically, as we demonstrated for the multi-player case in \cite{Asl_SCL2024}, these methods require satisfaction of a finite informativity condition, which can only be met in certain systems.

The solutions developed in \cite{xue2021inverse_tracking, lian2021robust, lian2021inverse, lian2021online, lian2022inverse} do not require knowledge of the input dynamics and are model-free. This feature is primarily achieved by utilizing the trajectories of the learner system, meaning that the learner must interact with the environment in these methods. However, these methods are bi-level and, therefore, more complex, as they require solving both forward and inverse optimal control problems in a nested manner. Hence, these methods lead to increased computational complexity and potential stability concerns when repeatedly solving the forward optimal control problem. Additionally, as noted in \cite{ASL_ESWA2024}, because these methods assign an arbitrary value to the input weight matrix $\mathbf{R}$, they may not be applicable to multi-dimensional input systems.

In the subsequent section, we will utilize (\ref{HJB_equation}) and (\ref{u_e}) to propose an algorithm that alleviates the aforementioned issues in the existing literature. In particular, our proposed method eliminates the need for a model of the system, extends applicability to a broader class of systems, and significantly reduces computational complexity by removing the requirement to solve the forward and inverse problems in a nested manner.

\section{Model-free solution}
In this section, we present a model-free algorithm to estimate either the true or equivalent versions of the cost function (\ref{J}). Unlike the methods in \cite{kamalapurkar2018linear, self2019online, Asl_IRL22}, which estimate all unknown parameters of the cost function simultaneously---potentially failing to satisfy the required finite informativity condition for some systems---the proposed method in this section divides the unknown parameters into two sets and estimates them separately.

The first set of parameters, which includes the input weight matrix $\mathbf{R}$ and the intermediary variable $\mathbf{W}_{V}$ that defines the value function through (\ref{V_NN}), is estimated iteratively through a gradient descent approach using the error between the optimal control inputs of the expert agent and the control inputs derived from the current estimate of the parameters. The second set of parameters, which includes the parameters of the state penalty function $Q_l(\cdot)$, is estimated after the convergence of the first set by utilizing the HJB equation. This technique guarantees the satisfaction of the required finite informativity condition for all systems defined by (\ref{system}). Additionally, by using the closed-form equation of the optimal policy in the estimation of $\mathbf{R}$ and $\mathbf{W}_{V}$ and updating these quantities together without requiring updates to $Q_l(\cdot)$, we eliminate the need for the complex data-driven policy updates at each iteration that are required in \cite{xue2021inverse, lian2021robust, perrusquia2022complementary, perrusquia2023drone}. Another important advantage of the proposed algorithm, compared to these and similar works, is that the policy updated during the estimation of the first set of parameters is not applied to the learner system, alleviating concerns about ensuring the stability of the updated policy.

\subsection{Estimation of the first set of parameters: using the gradient descent method}
In this section, we present an iterative method to estimate the input-penalty weight matrix $\mathbf{R}$ and $\mathbf{W}_{V}$ that defines the value function $V(\cdot)$ using the gradient descent technique. Although the value function is not directly required in the definition of the cost function (\ref{J}), its estimation serves as an intermediary variable essential to our algorithm's operation.

The key information for the estimation of $\mathbf{R}$ and $\mathbf{W}_{V}$ is the error between the control inputs of the expert agent $\mathbf{u}_e$ and the inputs of the learner agent $\mathbf{u}_l$, both obtained at the same states. In the proposed method, $\mathbf{u}_l$ can be obtained from the current estimate of the cost parameters without the need for 
 forcing the learner to factually visit the same states of the expert.

In order to develop model-free update laws, we first need to write the closed-form equation of the policies in the form of a known basis functions vector multiplied by an unknown constant matrix gain. This representation, which is also used in the model-free reinforcement learning (RL) methods for nonlinear systems, e.g., \cite{lee2014integral}, will be utilized to write the parameter vector of input dynamics $\mathbf{W}_g$, introduced in (\ref{g_NN}), based on the previous estimate of the policy gain and other estimated parameters, facilitating a model-free estimation.

To derive the closed-form equation of the optimal policy in the desired form, first, using (\ref{g_NN}) and (\ref{V_NN}) and assuming that $\boldsymbol{\epsilon}_g$ and $\epsilon_V$ are negligible, we develop the term $\mathbf{g}^{\top}\nabla V$, that appears in the optimal controller (\ref{u_e}), in the following form:
\begin{align}
\mathbf{g}(\mathbf{x}_e)^{\top}\nabla V(\mathbf{x}_e)&=\mathbf{W}_g\boldsymbol{\sigma}_g(\mathbf{x}_e)\nabla\boldsymbol{\sigma}_V(\mathbf{x}_e)^{\top}\mathbf{W}_V\nonumber\\
&=\mathbf{W}_g\mathbf{W}_u\boldsymbol{\sigma}_u(\mathbf{x}_e),
\label{g_GradV}
\end{align}
in which $\boldsymbol{\sigma}_u(\cdot)\in\mathbb{R}^{\mathcal{L}_u}$ is the basis vector for some positive integer $\mathcal{L}_u$, and $\mathbf{W}_u\in\mathbb{R}^{\mathcal{L}_g\times\mathcal{L}_u}$ is the matrix that is developed from the elements of $\mathbf{W}_V$. Note that the basis vector function $\boldsymbol{\sigma}_u$ contains unique and non-zero terms. Using (\ref{u_e}) and (\ref{g_GradV}), we can develop $\mathbf{u}_e$ as the product of a constant gain matrix, containing all unknown parameters, and a known basis vector as $\mathbf{u}_e=-\mathbf{K}_e\boldsymbol{\sigma}_u(\mathbf{x}_e)$, where the optimal control gain $\mathbf{K}_e\in\mathbb{R}^{m\times\mathcal{L}_u}$ is defined as $\mathbf{K}_e\triangleq 0.5\mathbf{R}^{-1}\mathbf{W}_g\mathbf{W}_u$. This representation of the optimal policy will be mainly utilized for the development of a model-free update law for $\mathbf{W}_V$.

We consider the same representation for the policy of the learner agent. That is, we assume that $\mathbf{K}_{l}\in\mathbb{R}^{m\times n}$ is an optimal control gain obtained for a specific cost function defined by the positive-semidefinite function $Q_l(\cdot)\in\mathbb{R}^{n\times n}$ and the matrix $\mathbf{R}_l\in\mathbb{R}^{m\times m}$, where $Q_l(\cdot)$ and $\mathbf{R}_l$ are, respectively, corresponding terms to $Q(\cdot)$ and $\mathbf{R}$ in (\ref{J}). Hence, the optimal controller and the control gain matrix can be defined as
\begin{align}
&\mathbf{u}_l=-\mathbf{K}_l\boldsymbol{\sigma}_u(\mathbf{x}_l),\label{u_l}\\
&\mathbf{K}_l\triangleq \frac{1}{2}\mathbf{R}_l^{-1}\mathbf{W}_g\mathbf{W}_{ul},
\label{K_l}
\end{align}
where $\mathbf{W}_{ul}\in\mathbb{R}^{\mathcal{L}_g\times \mathcal{L}_u}$ is a matrix that satisfies $\boldsymbol{\sigma}_g\nabla\boldsymbol{\sigma}_V^{\top}\mathbf{W}_{Vl}=\mathbf{W}_{ul}\boldsymbol{\sigma}_u$, in which $\mathbf{W}_{Vl}\in\mathbb{R}^{\mathcal{L}_V}$ is the weight vector defining the value function corresponding to this controller.

Considering the definition of $\mathbf{K}_l$, the difference between $\mathbf{K}_l$ and $\mathbf{K}_e$ can be used to estimate $\mathbf{R}_l$ and $\mathbf{W}_{ul}$ through a gradient descent method, similar to the approach in \cite{ASL_ESWA2024}. However, this method requires the estimation of $\mathbf{K}_e$. In addition, since the true $\mathbf{W}_{ul}$ may contain zero and/or identical components, the estimation using these quantities through the gradient descent method may require techniques such as masking that might increase the complexity of the estimation process. Therefore, for simplicity, we utilize the difference between control inputs $\mathbf{u}_l$ and $\mathbf{u}_e$ to estimate $\mathbf{R}_{l}$ and $\mathbf{W}_{Vl}$. Although the control inputs may have less information compared to the control gains (due to reduced dimension), the estimation is still feasible by applying corrections over multiple data points, as detailed in the following.

The difference between $\mathbf{u}_l$ and $\mathbf{u}_e$ is defined as the error vector $\mathbf{E}_{u}\in\mathbb{R}^{m}$, which is given by
\begin{equation}
\mathbf{E}_{u}\triangleq\mathbf{u}_l-\mathbf{u}_e.
\label{E_K}
\end{equation}
This error signal can be obtained by using the observation data $\mathbf{u}_e$ and calculating $\mathbf{u}_l$ at the corresponding state, $\mathbf{x}_e$, using (\ref{u_l}). In (\ref{u_l}), $\mathbf{K}_l$ can be estimated using model-free approaches, such as the RL techniques detailed in \cite{lee2014integral} for some ${Q}_l(\cdot)$ and $\mathbf{R}_l$. Although the estimation of $\mathbf{K}_l$ through model-free methods is computationally expensive, as it will be cleared later, compared to the methods in  \cite{xue2021inverse_tracking, lian2021robust, lian2021inverse, lian2021online, lian2022inverse, perrusquia2023drone}, this process is only required once in the initialization of the algorithm.

Using the error vector (\ref{E_K}) and calculating this quantity at $i\in \{1,\cdots\mathcal{N}\}$ different state points, we define a scalar distance function that quantifies the disparity between the policies as follows:
\begin{equation}
\mathcal{E}\triangleq\sum_{i=1}^{\mathcal{N}}(\mathbf{E}_{u}^{(i)})^{\top}\mathbf{E}_{u}^{(i)},
\label{mathcal_E}
\end{equation}
where the superscript $(i)$ denotes the $i$-th sample of the error.

Using the optimal policy relation in (\ref{u_e}) and the parametric form of the value function and input dynamics (neglecting the errors $\boldsymbol{\epsilon}_g$ and $\epsilon_V$), the controller $\mathbf{u}_l$ can also be written in the following form: 
\begin{equation}
\mathbf{u}_l=-\frac{1}{2}\mathbf{R}_l^{-1}\mathbf{W}_g\boldsymbol{\sigma}_g\nabla\boldsymbol{\sigma}_V^{\top}\mathbf{W}_{Vl}.
\label{u_l_W_V}
\end{equation}
In this relation, two variables, $\mathbf{R}_l$ and $\mathbf{W}_{Vl}$, can be tuned to minimize $\mathcal{E}$. However, since $\mathcal{E}$ is not simultaneously convex with respect to both variables, the minimization process might converge to a local minimum, which is acceptable only if $\mathcal{E}$ is reduced below a specified threshold. Additionally, even if a combination of $\mathbf{R}_l$ and $\mathbf{W}_{Vl}$ that minimizes (\ref{mathcal_E}) is identified, there is no assurance that this set will yield a positive-semidefinite function $Q_l(\cdot)$, which will be estimated in the subsequent section. To overcome these challenges, the algorithm restarts the estimation process by assigning new initial values to $Q_l(\cdot)$ and $\mathbf{R}_l$ if the necessary conditions are not satisfied.

For the estimation of $\mathbf{R}_l$ and $\mathbf{W}_{Vl}$, the algorithm requires initial random estimates for $Q_l(\cdot)\geq 0$ and $\mathbf{R}_l\succ0$. The policy gain $\mathbf{K}_l$ and the value function weight vector $\mathbf{W}_{Vl}$ can then be obtained, for example, through model-based RL methods such as those detailed in \cite{lee2014integral}. The next steps involve developing $\mathbf{W}_{ul}$ from $\mathbf{W}_{Vl}$ such that $\boldsymbol{\sigma}_g\nabla\boldsymbol{\sigma}_V^{\top}\mathbf{W}_{Vl}=\mathbf{W}_{ul}\boldsymbol{\sigma}_u$ is satisfied, and estimating $\mathbf{R}_l$ and $\mathbf{W}_{Vl}$ such that the distance function (\ref{mathcal_E}) is minimized.

Considering (\ref{E_K}) and (\ref{u_l_W_V}), different methods can be applied for updating $\mathbf{R}_l$ and $\mathbf{W}_{Vl}$ to minimize $\mathcal{E}$. Here, we utilize a stochastic gradient descent technique, where the matrices are iteratively updated using the gradient of the distance function obtained at a single, randomly chosen state point, as follows:
\begin{align}
&\mathbf{R}_{l}^{(k+1)}=\mathbf{R}_{l}^{(k)}\nonumber\\
&+\alpha_{R}\left((\mathbf{R}_{l}^{(k)})^{-1}\mathbf{u}_{l}(\mathbf{E}_{u}^{(k)})^{\top}+\mathbf{E}_{u}^{(k)}\mathbf{u}_{l}^{\top}(\mathbf{R}_{l}^{(k)})^{-1}\right), \label{R_l_k}\\
&\mathbf{W}_{Vl}^{(k+1)}=\mathbf{W}_{Vl}^{(k)}+\alpha_{V}\left(\nabla\boldsymbol{\sigma}_V\boldsymbol{\sigma}_g^{\top}((\mathbf{W}_{ul}^{(k)})^{\dagger})^{\top}(\mathbf{K}_{l}^{(k)})^{\top}\mathbf{E}_{u}^{(k)}\right), \label{W_Vl_k}
\end{align}
where $\dagger$ denotes the pseudoinverse, $\alpha_R$ and $\alpha_V$ represent small positive gains, the superscript $(k)$ denotes the $k$-th iteration, and the arguments of the functions $\boldsymbol{\sigma}_g(\cdot)$ and $\nabla\boldsymbol{\sigma}_V(\cdot)$ are the state vector $\mathbf{x}_e$ corresponding to the sampled $\mathbf{u}_e$. Note that $\mathbf{W}_{ul}^{(k+1)}$ can be obtained from $\mathbf{W}_{Vl}^{(k+1)}$ to be used in the next iteration in (\ref{W_Vl_k}). Also, to derive (\ref{W_Vl_k}), we have utilized (\ref{u_l_W_V}) and the following relation:
\begin{equation}
2\mathbf{K}_l\mathbf{W}_{ul}^{\dagger}=\mathbf{R}_{l}^{-1}\mathbf{W}_g.
\end{equation}
The update gains $\alpha_R$ and $\alpha_u$ are chosen to be sufficiently small to ensure a step size toward the direction of the negative gradient. Given the necessity for the matrix $\mathbf{R}_l$ and the function $\mathbf{W}_{Vl}^{\top}\boldsymbol{\sigma}_V(\cdot)$ to remain positive-definite, the step sizes are selected to uphold this condition. It is worth noting that alternative advanced optimization techniques, such as projection-based methods, and Cholesky factorization-based updates, can also be employed to rigorously enforce positive definiteness during parameter updates \cite{boyd2004convex, nocedal1999numerical}.

For the subsequent iteration of (\ref{R_l_k}) and (\ref{W_Vl_k}), an updated $\mathbf{K}_l$ is required, which can be obtained using the following model-free expression:
\begin{align}
&\mathbf{K}_{l}^{(k+1)}
=0.5(\mathbf{R}_{l}^{(k+1)})^{-1}\mathbf{W}_g\mathbf{W}_{ul}^{(k+1)}\nonumber\\
&=0.5(\mathbf{R}_{l}^{(k+1)})^{-1}\mathbf{R}_{l}^{(k)}(\mathbf{R}_{l}^{(k)})^{-1}\mathbf{W}_g\mathbf{W}_{ul}^{(k)}(\mathbf{W}_{ul}^{(k)})^{\dagger}\mathbf{W}_{ul}^{(k+1)}\nonumber\\
&=(\mathbf{R}_{l}^{(k+1)})^{-1}\mathbf{R}_{l}^{(k)}\mathbf{K}_{l}^{(k)}(\mathbf{W}_{ul}^{(k)})^{\dagger}\mathbf{W}_{ul}^{(k+1)}.
\label{K_next}
\end{align}
The updated $\mathbf{K}_l$ will be directly used in (\ref{W_Vl_k}). It is also required to calculate $\mathbf{u}_l$ in the next selected state vector through (\ref{u_l}) for use in (\ref{R_l_k}) and the calculation of $\mathbf{E}_u$.

The variables $\mathbf{R}_l$ and $\mathbf{W}_{Vl}$ are iteratively updated through (\ref{R_l_k}) and (\ref{W_Vl_k}) until $\mathcal{E}^{(k)}$ becomes less than $\epsilon_E$, for some small $\epsilon_E\in\mathbb{R}^{+}$, for a sufficient number of last iterations. Note that the updates presented in (\ref{R_l_k}) and (\ref{W_Vl_k}) constitute a gradient-descent-based optimization to minimize the quadratic error cost function (\ref{mathcal_E}). Since $\mathcal{E}$ is constructed as a quadratic sum of smooth errors, it is continuously differentiable with Lipschitz-continuous gradients. Then, with sufficiently small step sizes, $\alpha_R$ and $\alpha_V$, the gradient descent update ensures a monotone decrease in $\mathcal{E}$ at each iteration, guaranteeing the convergence of the estimated parameters to a locally optimal solution.

Since we update both $\mathbf{R}_l$ and $\mathbf{W}_{Vl}$ in this step, the policy gain $\mathbf{K}_l$ corresponding to the updated cost function can be obtained through (\ref{K_next}) regardless of the value of $Q_l(\cdot)$. This approach, compared to the methods in \cite{xue2021inverse_tracking, lian2021robust, lian2021inverse, lian2021online, lian2022inverse, perrusquia2023drone}, offers two distinct advantages. First, in the subsequent iterations of $\mathbf{R}_l$ and $\mathbf{W}_{Vl}$, there is no need to solve a forward RL problem. Second, the process is offline, and hence, the updated policy $\mathbf{K}_{l}$ in (\ref{K_next}) does not need to be applied to the system, eliminating concerns regarding the stability of the updated policy. 

\subsection{Estimation of the second set of parameters: using the HJB equation}
In this section, we present a method to estimate the state-penalty function $Q_l(\cdot)$ using the HJB equation, along with the known terms $\mathbf{R}_l$ and $\mathbf{W}_{Vl}$ estimated in the previous section. Since the control gain $\mathbf{K}_l$ has converged to the optimal one $\mathbf{K}_e$ in the previous step, we can utilize the trajectories of the expert agent to develop the HJB equation for estimating $Q_l(\cdot)$. This is possible because the same controllers yield the same performance. Using the expert's trajectories simplifies the estimation procedure, as the observation data from the expert, utilized in the previous step, can be directly applied here to develop the HJB equation. This approach eliminates the need to further run the learner/expert agents for the estimation of $Q_l(\cdot)$. Following this approach, the learner agent is only required to be run initially to obtain values for $\mathbf{K}_{l}$ and $\mathbf{W}_{Vl}$ for the first iteration of (\ref{R_l_k}), (\ref{W_Vl_k}), and (\ref{K_next}).

In order to facilitate a model-free estimation of $Q_l(\cdot)$, similar to integral RL techniques \cite{modares2014integral, lee2014integral}, we integrate the HJB equation (\ref{HJB_equation}) over any time interval $T>0$, extending from $t-T$ to $t$, and use the estimated terms $\mathbf{R}_l$ and $\mathbf{W}_{Vl}$ from the previous section to develop the following relation:
\begin{align}
V_l(\mathbf{x}_e(t))=
&~V_l(\mathbf{x}_e(t-T))\nonumber\\
&-\int_{t-T}^{t}[Q_l(\mathbf{x}_e(\tau))+\mathbf{u}_{e}(\tau)^{\top}\mathbf{R}_{l}\mathbf{u}_{e}(\tau)]\text{d}\tau,
\label{HJB_int}
\end{align}
where, using (\ref{V_NN}) and (\ref{Q_NN}), $V_l(\cdot)$ and $Q_l(\cdot)$ are defined as $V_l(\cdot)\triangleq\mathbf{W}_{Vl}^{\top}\boldsymbol{\sigma}_V(\cdot)+\epsilon_V$, and $Q_l(\cdot)\triangleq\mathbf{W}_{Ql}^{\top}\boldsymbol{\sigma}_Q(\cdot)+\epsilon_Q$, for some $\mathbf{W}_{Ql}\in\mathbb{R}^{\mathcal{L}_Q}$. The integral HJB equation (\ref{HJB_int}) is the key relation for the estimation of $Q_l(\cdot)$. Note that this equation is developed using the trajectories of the expert system; that is, $(\mathbf{x}_e,\mathbf{u}_e)$. 

Now, neglecting the functional reconstruction errors, we can write (\ref{HJB_int}) in the parametric form as follows:
\begin{equation}
\mathbf{W}_{Ql}^{\top}\bar{\boldsymbol{\sigma}}_Q
=\psi,
\label{psi}
\end{equation}
where the known functions $\bar{\boldsymbol{\sigma}}_Q\in\mathbb{R}^{\mathcal{L}_Q}$ and $\psi\in\mathbb{R}$ are defined as
\begin{align}
&\bar{\boldsymbol{\sigma}}_Q(\mathbf{x}_e)
\triangleq\int_{t-T}^{t}\boldsymbol{\sigma}_Q(\mathbf{x}_e(\tau))\text{d}\tau, \label{bar_sigma_Q}\\
&\psi\triangleq V_l(\mathbf{x}_e(t-T))-V_l(\mathbf{x}_e(t))-\int_{t-T}^{t}\mathbf{u}_{e}(\tau)^{\top}\mathbf{R}_{l}\mathbf{u}_{e}(\tau)\text{d}\tau.
\end{align}
Then, the function $Q_l(\cdot)$ can be estimated using a least-squares method to solve the linear system (\ref{psi}) for the unknown vector $\mathbf{W}_{Ql}$ by collecting a sufficient amount of data.
\begin{definition}
The signal $\mathbf{x}_e$ is called finitely informative if
there exist time instances $0\leq t_1<\cdots<t_N$ such that the resulting history stack of $\bar{\boldsymbol{\sigma}}_Q(\mathbf{x}_e)$ is full rank.
\end{definition}

For the system (\ref{psi}), it is guaranteed that a sufficient collection of data results in a full-rank history stack (i.e., satisfies the finite informativity condition), allowing the linear system (\ref{psi}) to admit a unique solution. Therefore, this approach has an advantage over the techniques in \cite{kamalapurkar2018linear, self2019online, Asl_IRL22}, where the history stack might not be full rank for certain systems \cite{ASL_ESWA2024}.

The complete steps of the algorithm are summarized in Algorithm~\ref{Alg}, and the block diagram of the proposed algorithm is shown in Fig.~\ref{BlockDiagram_Alg1}. The algorithm can be slightly simplified for single-dimensional input systems, as indicated by the following lemma.
\begin{figure}
	\centering
	\includegraphics[scale=.5]{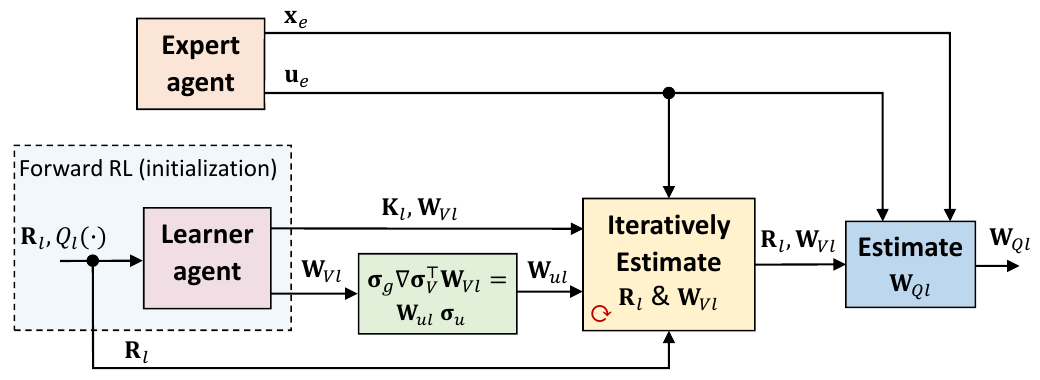}
	\caption{Block diagram of the proposed IRL/IOC method (Algorithm~\ref{Alg}).}
	\label{BlockDiagram_Alg1}
\end{figure}
\begin{algorithm}
\caption{Model-free IRL/IOC approach}\label{Alg}
\begin{algorithmic}[1]
\item Collect input-state trajectories of the expert;
\item Set the iteration number $k=1$ and assign initial values $Q_l(\cdot)\geq 0$ and $\mathbf{R}_{l}^{(k)}\succ0$; then solve for $\mathbf{W}_{Vl}^{(k)}$ and $\mathbf{K}_{l}^{(k)}$ through a model-free RL method on the learner side;\label{Step_initialize_2}
\item Develop $\mathbf{W}_{ul}$ from $\mathbf{W}_{Vl}$ such that $\boldsymbol{\sigma}_g\nabla\boldsymbol{\sigma}_V^{\top}\mathbf{W}_{Vl}=\mathbf{W}_{ul}\boldsymbol{\sigma}_u$ is satisfied; \label{Devel_W_ul}
\item Estimate $\mathbf{R}_l$ and $\mathbf{W}_{Vl}$;
\label{Step_Update_R_P}
\newline
Loop over $k$:
\begin{itemize}
\setlength\itemsep{.0em}
\item Select a sample from the expert's data $(\mathbf{x}_e,\mathbf{u}_e)$;
\item Calculate $\mathbf{u}_l$ at the selected states $\mathbf{x}_e$ using (\ref{u_l});
\item Calculate the policy difference: $\mathbf{E}_{u}^{(k)}=\mathbf{u}_{l}-\mathbf{u}_e$;
\item Stop if the recent values of $\mathcal{E}^{(k)}$ over the last several iterations are all less than $\epsilon_E$;
\item Update $\mathbf{R}_{l}$ through (\ref{R_l_k});
\item Update $\mathbf{W}_{Vl}$ through (\ref{W_Vl_k});
\item Update $\mathbf{K}_l$ through (\ref{K_next});
\end{itemize}
\item Estimate $Q_l(\cdot)$ using (\ref{psi}); \label{Step_Update_Q}
\item If $Q_l(\cdot)\leq 0$, go to Step~\ref{Step_initialize_2}.
\end{algorithmic}
\end{algorithm}

\begin{lemma}\label{Lemma_R}
For systems with single-dimensional input, an arbitrary fixed positive value can be assigned to $\mathbf{R}_l$. Consequently, the update of $\mathbf{R}_l$ in Step~\ref{Step_Update_R_P} of Algorithm~\ref{Alg} is not required when $m=1$.
\end{lemma}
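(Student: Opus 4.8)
The plan is to exploit the widely recognized non-uniqueness of the IOC/IRL solution—any parameters that reproduce the expert's optimal policy are acceptable—and to show that for $m=1$ the input weight possesses a pure scaling gauge that can be absorbed entirely into the value-function weight $\mathbf{W}_{Vl}$, and consequently into $Q_l(\cdot)$.

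First I would note that for scalar input the closed-form policy (\ref{u_e}) reads $\mathbf{u}_e = -\tfrac{1}{2}\mathbf{R}^{-1}\mathbf{g}(\mathbf{x}_e)^{\top}\nabla V$ with $\mathbf{R}^{-1}$ a positive scalar, so the policy depends on $\mathbf{R}$ and $V$ only through the product $\mathbf{R}^{-1}\nabla V$. This observation is the crux: any rescaling of $\mathbf{R}$ can be exactly compensated by the opposite rescaling of $\nabla V$.

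Next I would construct the equivalent solution explicitly. Fixing $\mathbf{R}_l = c$ for an arbitrary scalar $c>0$, define $V_l \triangleq c\mathbf{R}^{-1}V$ (equivalently $\mathbf{W}_{Vl}=c\mathbf{R}^{-1}\mathbf{W}_V$) and $Q_l \triangleq c\mathbf{R}^{-1}Q$. A direct substitution into (\ref{u_e}) gives $\mathbf{u}_l = -\tfrac{1}{2}\mathbf{R}_l^{-1}\mathbf{g}^{\top}\nabla V_l = -\tfrac{1}{2}\mathbf{R}^{-1}\mathbf{g}^{\top}\nabla V = \mathbf{u}_e$, so the expert's policy is recovered for every $c$. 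Multiplying the expert's HJB identity (\ref{HJB_equation}) by the positive scalar $c\mathbf{R}^{-1}$ shows that the triple $(Q_l,\mathbf{R}_l,V_l)$ satisfies the same HJB equation, confirming it is a legitimate optimal-cost/value pair. Since $c\mathbf{R}^{-1}>0$ and $Q(\cdot)\geq 0$, the scaled penalty $Q_l(\cdot)\geq 0$ stays positive-semidefinite and $\mathbf{R}_l=c\succ 0$, so all feasibility requirements of the cost (\ref{J}) are met.

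The consequence for Algorithm~\ref{Alg} is immediate: for any fixed $\mathbf{R}_l=c\succ 0$ there exists a $\mathbf{W}_{Vl}$—the scaled weight above—for which $\mathbf{E}_u\equiv 0$ and hence $\mathcal{E}=0$. Therefore the gradient-descent minimization of (\ref{mathcal_E}) over $\mathbf{W}_{Vl}$ alone can still drive the policy error below $\epsilon_E$, after which the HJB-based recovery of $Q_l(\cdot)$ proceeds unchanged, and the update (\ref{R_l_k}) of $\mathbf{R}_l$ becomes redundant. I expect the only points requiring care to be the verification that the HJB identity is genuinely preserved under the scaling and that the rescaled $Q_l$ remains admissible—both of which hinge solely on the positivity of the single scalar factor $c\mathbf{R}^{-1}$—rather than any substantial analytical difficulty.
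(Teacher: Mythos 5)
Your proof is correct, and it is in fact more informative than what the paper supplies: the paper's entire proof of Lemma~\ref{Lemma_R} is a one-line deferral to the linear-systems argument in \cite{ASL_ESWA2024}, so your scaling-gauge construction is precisely the (unstated) content, worked out directly for the nonlinear input-affine setting. The underlying idea is the same in both: the IOC solution admits the positive-scaling gauge $(Q,\mathbf{R},V)\mapsto(\lambda Q,\lambda\mathbf{R},\lambda V)$, $\lambda>0$, which leaves the policy (\ref{u_e}) invariant, and for $m=1$ this one-parameter orbit reaches every positive $\mathbf{R}_l=c$ via $\lambda=c\mathbf{R}^{-1}$; so your route is not genuinely different, merely self-contained where the paper cites. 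One step worth making explicit: multiplying the identity (\ref{HJB_equation}) by $\lambda$ only shows that the scaled triple satisfies the HJB identity along the expert trajectory; to conclude that $V_l=\lambda V$ is the optimal value function for $(Q_l,\mathbf{R}_l)$ you should add that positive scaling commutes with the pointwise minimization in the HJB, i.e., $\min_{\mathbf{u}}\{\lambda Q+\mathbf{u}^{\top}\lambda\mathbf{R}\mathbf{u}+\lambda\nabla V^{\top}(\mathbf{f}+\mathbf{g}\mathbf{u})\}=\lambda\min_{\mathbf{u}}\{\cdot\}=0$, which together with your substitution into (\ref{u_e}) (exhibiting the unchanged minimizer) closes the verification argument. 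Your write-up also makes transparent exactly where $m=1$ enters---$c\mathbf{R}^{-1}$ is a scalar that can be pushed through $\mathbf{g}^{\top}\nabla V$ and absorbed into the scalar-valued $V_l$ and $Q_l$, whereas for $m>1$ the gauge only reaches the ray $\{\lambda\mathbf{R}:\lambda>0\}$ rather than an arbitrary positive-definite $\mathbf{R}_l$---which is consistent with the remark immediately following the lemma. Finally, your algorithmic conclusion is the right one: for any fixed $\mathbf{R}_l=c>0$ there exists $\mathbf{W}_{Vl}=\lambda\mathbf{W}_V$ driving $\mathcal{E}$ in (\ref{mathcal_E}) to zero (up to the neglected reconstruction errors), so the update (\ref{R_l_k}) in Step~\ref{Step_Update_R_P} is redundant when $m=1$.
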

\begin{proof}
The steps of the proof are similar to those presented in \cite{ASL_ESWA2024} for linear systems.
\end{proof}
\begin{remark}
For systems with $m>1$, we may not be able to select $\mathbf{R}_l$ arbitrarily and will need to update it in Step~\ref{Step_Update_R_P} of Algorithm~\ref{Alg}. An example system is provided in \cite{ASL_ESWA2024}.
\end{remark}

\subsection{Estimation of the second set of parameters: using the enhanced HJB equation}
Although the finite informativity condition can be guaranteed for the system (\ref{psi}) with a sufficient collection of data, the estimation error of $Q_l(\cdot)$ (or $\mathbf{W}_{Ql}$) could be large when the state trajectories evolve with low amplitudes, due to the overlooked functional reconstruction error terms $\epsilon_Q$ and $\epsilon_V$ in the development of (\ref{psi}). This is because, under these conditions, the relative influence of the functional reconstruction errors becomes significant in (\ref{psi}). Furthermore, similar to the problem of estimating the value function in forward RL \cite{modares2014integral}, as the system's dimension and the number of basis functions for $Q_l(\cdot)$ increase, richer data might be required for an accurate estimation of $Q_l(\cdot)$. If the collected data are not sufficiently rich, the minimum singular value of the learning matrix obtained from $\bar{\boldsymbol{\sigma}}_Q(\cdot)$ will be small, which may lead to reduced accuracy in the estimation of $Q_l(\cdot)$.

To alleviate the aforementioned problems, we apply off-policy inputs to the learner agent in an effort to enrich the signals and increase their amplitude, and develop the enhanced HJB equation using the trajectories of the learner system as follows:
\begin{align}
&V_l(\mathbf{x}_l(t))=
V_l(\mathbf{x}_l(t-T))\nonumber\\
&-\int_{t-T}^{t}[Q_l(\mathbf{x}_l(\tau))+2\mathbf{u}_l^{\ast}(\tau)^{\top}\mathbf{R}_l\mathbf{u}_p(\tau)+\mathbf{u}_l^{\ast}(\tau)^{\top}\mathbf{R}_{l}\mathbf{u}_l^{\ast}(\tau)]\text{d}\tau,
\label{HJB_int_learner}
\end{align}
where $\mathbf{u}_p(t)\in\mathbb{R}^{m}$ is an enriching signal and $\mathbf{u}_l^{\ast}=-\mathbf{K}_e\boldsymbol{\sigma}_u(\mathbf{x}_l)$, and we have $\mathbf{u}_l=\mathbf{u}_l^{\ast}+\mathbf{u}_p$. The rationale for the inclusion of $2{\mathbf{u}_l^{\ast}}^{\top}\mathbf{R}_l\mathbf{u}_p$ in (\ref{HJB_int_learner}) can be found in \cite{ASL_NN2024}. We can use (\ref{HJB_int_learner}) to estimate $\mathbf{W}_{Ql}$ by developing its parametric form, similar to (\ref{psi}), where the known function $\bar{\boldsymbol{\sigma}}_Q$ is defined as in (\ref{bar_sigma_Q}) using the state trajectories of the learner $\mathbf{x}_l(t)$, and $\psi$ is defined as follows:
\begin{align}
\psi
&\triangleq V_l(\mathbf{x}_l(t-T))-V_l(\mathbf{x}_l(t))\nonumber\\
&-\int_{t-T}^{t}2\mathbf{u}_l^{\ast}(\tau)^{\top}\mathbf{R}_l\mathbf{u}_p(\tau)+\mathbf{u}_{l}^{\ast}(\tau)^{\top}\mathbf{R}_{l}\mathbf{u}_{l}^{\ast}(\tau)\text{d}\tau.
\end{align}

\section{Solution for systems with known input dynamics}
In this section, we address the solution of the IOC problem for the case where the input dynamics of the system $\mathbf{g}(\cdot)$ are known. We demonstrate that under this condition, Algorithm~\ref{Alg} can be modified to solve the problem using only the trajectories of the expert system, meaning that interaction between the learner agent and the environment is not required. This, along with other features that will be outlined subsequently, significantly simplifies the estimation procedure.  In comparison to the methods presented in \cite{kamalapurkar2018linear, self2019online, self2020online, SELF2022110242}, which also solve the IOC/IRL problem using only expert trajectories, our proposed approach can be applied to a wider range of systems, while all these methods require knowledge of the input dynamics. Furthermore, compared to Algorithm~3 in \cite{lian2022inverse}, which also assumes known input dynamics, our approach is much simpler, as the method in \cite{lian2022inverse} is bi-level and requires trajectories from both the expert and learner agents.

In Algorithm~\ref{Alg}, the trajectories of the learner system are primarily required in Step~\ref{Step_initialize_2}, where a model-free forward RL problem must be solved for an initially given cost function parameters $\mathbf{R}_l$ and $Q_l(\cdot)$. Before explaining why this step is unnecessary for systems with known input dynamics, we first present the modified update laws for the parameters specific to this condition.

When the input dynamics are known, the development of the controller in the form of (\ref{u_l}), which is typically used for the model-free update of $\mathbf{W}_{Vl}$, is no longer necessary. This feature allows us to eliminate Step~\ref{Devel_W_ul} of Algorithm~\ref{Alg}. In fact, by considering (\ref{V_NN}) and (\ref{u_e}), the iterative estimation of $\mathbf{R}_l$ and $\mathbf{W}_{Vl}$ can be performed using (\ref{R_l_k}) and the following relation:
\begin{align}
\mathbf{W}_{Vl}^{(k+1)}=\mathbf{W}_{Vl}^{(k)}+\alpha_{V}\left(\nabla\boldsymbol{\sigma}_V\mathbf{g}(\mathbf{R}_{l}^{(k)})^{-1}\mathbf{E}_{u}^{(k)}\right),
\label{W_Vl_k_g}
\end{align}
respectively. Since $\mathbf{K}_l$ is not needed in (\ref{R_l_k}) and (\ref{W_Vl_k_g}) (and in $\mathbf{u}_l$, as noted below), updating $\mathbf{K}_l$, which is required in Algorithm~\ref{Alg} through (\ref{K_next}), is also unnecessary.

Now, to demonstrate that the learner's trajectories are not required, consider that by utilizing (\ref{V_NN}) and (\ref{u_e}), we can express $\mathbf{u}_l$ as follows:
\begin{equation}
\mathbf{u}_l= -\frac{1}{2}\mathbf{R}_{l}^{-1}\mathbf{g}^{\top}\nabla\boldsymbol{\sigma}_V^{\top}\mathbf{W}_{Vl}
\label{u_l_g}
\end{equation}
with $\epsilon_V$ neglected. This relation, along with some arbitrary initial values for $\mathbf{R}_l$ and $\mathbf{W}_{Vl}$ (ensuring that $\mathbf{W}_{Vl}^{\top}\boldsymbol{\sigma}_V(\cdot)$ is positive-definite), can be used to set the required values for the first iteration in the estimation of $\mathbf{R}_l$ and $\mathbf{W}_{Vl}$ via (\ref{R_l_k}) and (\ref{W_Vl_k_g}). As a result, the forward RL problem in Step~\ref{Step_initialize_2} of Algorithm~\ref{Alg} becomes unnecessary. Assigning an arbitrary value to $\mathbf{W}_{Vl}$, instead of estimating it through a forward RL using given $\mathbf{R}_l$ and $Q(\cdot)$, is not problematic here because the estimation process of $\mathbf{R}_l$ and $\mathbf{W}_{Vl}$ is performed offline. Thus, any initial or intermediate updated values that might result in an unstable controller will not be applied to the system.

After $\mathbf{R}_{l}$ and $\mathbf{W}_{Vl}$ have converged, which is guaranteed for sufficiently small $\mathcal{E}$, (\ref{HJB_int}) can be used to estimate $Q_l(\cdot)$ using only the trajectories of the expert system. The complete steps of the algorithm are summarized in Algorithm~\ref{Alg2}, and its block diagram is shown in Fig.~\ref{BlockDiagram_Alg2}. Note that Lemma~\ref{Lemma_R} is also applied to Algorithm~\ref{Alg2}.

\begin{algorithm}
\caption{IRL/IOC approach for systems with known input dynamics}\label{Alg2}
\begin{algorithmic}[1]
\item Collect input-state trajectories of the expert;
\item Set the iteration number $k=1$ and assign initial values $\mathbf{R}_{l}^{(k)}\succ0$ and $\mathbf{W}_{Vl}$ such that $\mathbf{W}_{Vl}^{\top}\boldsymbol{\sigma}_V(\cdot)> 0$;
\item Estimate $\mathbf{R}_l$ and $\mathbf{W}_{Vl}$;
\newline
Loop over $k$:
\begin{itemize}
\setlength\itemsep{.0em}
\item Select a sample from the expert's data $(\mathbf{x}_e,\mathbf{u}_e)$;
\item Calculate $\mathbf{u}_l$ at the selected states $\mathbf{x}_e$ using (\ref{u_l_g});
\item Calculate the policy difference: $\mathbf{E}_{u}^{(k)}=\mathbf{u}_{l}-\mathbf{u}_e$;
\item Stop if the recent values of $\mathcal{E}^{(k)}$ over the last several iterations are all less than $\epsilon_E$;
\item Update $\mathbf{R}_{l}$ through (\ref{R_l_k});
\item Update $\mathbf{W}_{Vl}$ through (\ref{W_Vl_k_g});
\end{itemize}
\item Estimate $Q_l(\cdot)$ using (\ref{psi});
\item If $Q_l(\cdot)\leq 0$, go to Step~\ref{Step_initialize_2}.
\end{algorithmic}
\end{algorithm}
\begin{figure}
	\centering
	\includegraphics[scale=.55]{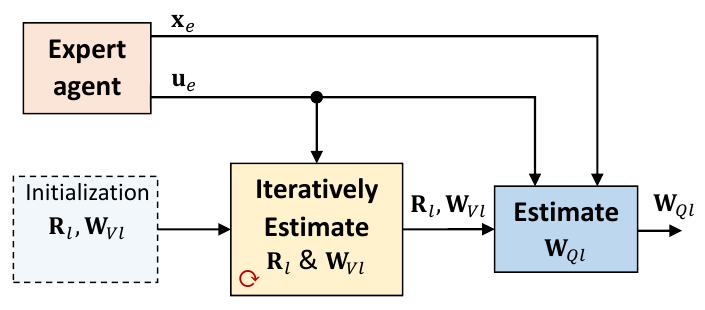}
	\caption{Block diagram of the proposed IRL/IOC method (Algorithm~\ref{Alg2}).}
	\label{BlockDiagram_Alg2}
\end{figure}

\section{Simulation Results}
In this section, we present the simulation results of the proposed IRL/IOC algorithms. Since Algorithm~\ref{Alg2} is similar to Algorithm~\ref{Alg} and is developed specifically for cases where the input dynamics are known, we only present the evaluation results of Algorithm~\ref{Alg}. In all experiments, we assumed that the state and input trajectories of the expert system are available with a sampling rate of $1~\si{m\second}$. The state points for Step~\ref{Step_Update_R_P} of the algorithm were randomly selected at each iteration from states whose amplitudes were larger than a specified threshold, while for Step~\ref{Step_Update_Q}, the points were sampled from the trajectory at intervals of $0.03~\si{\second}$, starting from $t=0~\si{\second}$ until the signal amplitudes fell below a certain threshold. It is noteworthy that only one set of input-state trajectories from the expert agent is necessary, while those from the learner agent are only required during initialization, i.e., Step~\ref{Step_initialize_2} of the algorithm.

\subsection{Results for a single-dimensional input system}
In the first study, we considered the single-dimensional input nonlinear system defined by the following functions \cite{asl2023reinforcement}:
\begin{align}
&\mathbf{f}(\mathbf{x}_e)=\begin{bmatrix}
-x_{e1}+x_{e2} \\
-0.5x_{e1}-0.5x_{e2}\left(1-(\cos(2x_{e1})+2)^2\right)
\end{bmatrix},\nonumber\\
&\mathbf{g}(\mathbf{x}_e)=\begin{bmatrix}
0 \\
\cos(2x_{e1})+2
\end{bmatrix},
\label{Sim_Example1}
\end{align}
with $\mathbf{x}_e=[x_{e1},\;x_{e2}]^{\top}$, where the state vector was initialized at $\mathbf{x}_e(0)=[2,\;2]^{\top}$. For this system, when the cost function is defined with the following terms 
\begin{equation}
Q(\mathbf{x}_e)=x_{e1}^2+x_{e2}^2,\quad R=1,
\end{equation}
the value function can be obtained as $V(\mathbf{x}_e)=0.5x_{e1}^2+x_{e2}^2$ \cite{bhasin2013novel}. We considered $\boldsymbol{\sigma}_V(\mathbf{x}_e)=[x_{e1}^2,\,x_{e1}x_{e2},\,x_{e2}^2]^{\top}$, and $\boldsymbol{\sigma}_g(\mathbf{x}_e)=[1,\;0;\; 0,\; 1;\; 0,\;\cos(2x_{e1})]\in\mathbb{R}^{3\times 2}$. Hence, we obtained $\mathbf{K}_e=[0,\, 1,\, 0,\, 0.5]$, which was used only for comparison with the estimated control gain $\mathbf{K}_l$.

To estimate the parameters, we initially defined the cost function terms as $Q_l({\cdot})=0.5x_{e1}^2+1.5x_{e2}^2$ and $R_l=0.8$. We used the approach in \cite{lee2014integral} to estimate $\mathbf{W}_{Vl}$ and $\mathbf{K}_l$ corresponding to the given cost function, obtaining $\mathbf{W}_{Vl}=[0.2664,\, -0.0526,\, 0.8623]^{\top}$ and $\mathbf{K}_l=[-0.0657,\,1.0779,\,-0.0328,\,0.5389]$. Using the initial value of $\mathbf{W}_{Vl}$ and considering $\boldsymbol{\sigma}_V$ and $\boldsymbol{\sigma}_g$, we then obtained the initial value for $\mathbf{W}_{ul}$ as
\begin{equation}
\mathbf{W}_{ul}=\begin{bmatrix}
0.2664 & -0.0526 & 0 & 0\\
-0.0526 & 0.8623 & 0 & 0\\
0 & 0 & -0.0526 & 0.8623
\end{bmatrix}.
\end{equation}
Since the considered example is a single-dimensional input system, according to Lemma~\ref{Lemma_R}, we did not update $R_l$ in Step~\ref{Step_Update_R_P} of the algorithm. The update gain for estimating $\mathbf{W}_{Vl}$ was set to $\alpha_V=0.003$. The convergence of $\mathbf{W}_{Vl}$ to an equivalent value, and $\mathbf{K}_l$ to $\mathbf{K}_e$ is shown in Fig.~\ref{K_L_Sim1}.

The estimated $\mathbf{W}_{Vl}$, along with the initial value of $R_l$, was then used to estimate $\mathbf{W}_{Ql}$, with $\boldsymbol{\sigma}_Q=[x_{e1}^2,\,2x_{e1}x_{e2},\,x_{e2}^2]^{\top}$, yielding
\begin{equation}
\mathbf{W}_{Ql}=[0.5252,\;0.1368,\;0.8026]^{\top}.
\end{equation}
To verify the estimated parameters, we applied them in the forward RL algorithm from \cite{lee2014integral}, resulting in an estimated control gain vector of $[-0.0003,\;1.0003,\;-0.0001,\;0.5002]$, which is very close to the true control gain $\mathbf{K}_e$. 
\begin{figure}
	\centering
	\includegraphics[scale=.44]{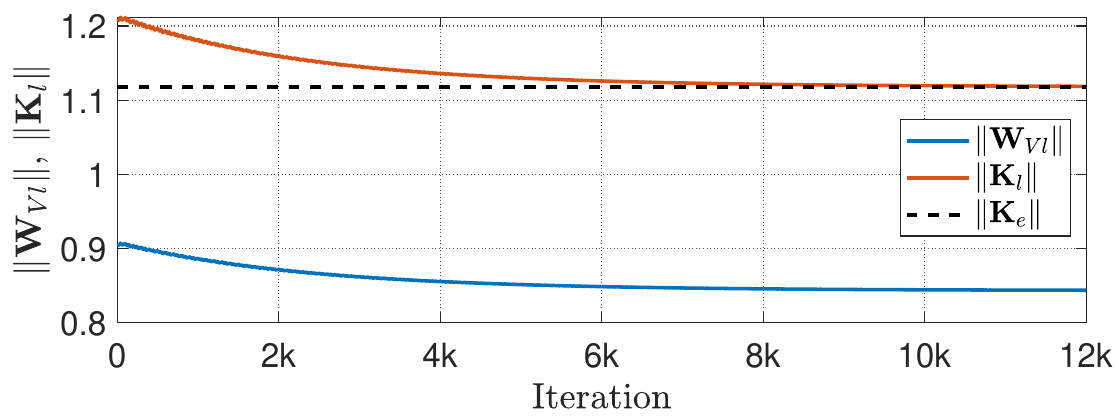}
	\caption{Evolution of the norm of the control gain $\mathbf{K}_l$ and the parameter vector $\mathbf{W}_{Vl}$ for the system (\ref{Sim_Example1}).}
	\label{K_L_Sim1}
\end{figure}

\subsection{Results for a quadrotor}
In this study, we considered the rotational dynamics of a quadrotor aerial vehicle. We tested the method across two simulation frameworks: MATLAB and a Python-based implementation using the MuJoCo physics engine. In the MATLAB environment, we isolated the rotational dynamics, modeling the quadrotor's attitude in a controlled setting without additional subsystems, enabling a focused analysis of the IOC solution. In contrast, the Python-MuJoCo environment offered a physically realistic simulation of the full quadrotor model, where we addressed the rotational dynamics while employing a PID controller to stabilize altitude and maintain a hover state---a practical necessity to facilitate testing of the rotational control.

In MATLAB, we considered the rotational dynamics of a quadrotor aerial vehicle, defined by the following functions:
\begin{align}
&\mathbf{f}(\cdot)=
\begin{bmatrix}
\dot{\phi}\\
\dot{\theta}\\
\dot{\psi}\\
-\frac{I_{yy}-I_{zz}}{I_{xx}}\dot{\theta}\dot{\psi}\\
-\frac{I_{zz}-I_{xx}}{I_{yy}}\dot{\phi}\dot{\psi}\\
-\frac{I_{xx}-I_{yy}}{I_{zz}}\dot{\phi}\dot{\theta}
\end{bmatrix},\,\mathbf{g}=\begin{bmatrix}
\mathbf{0}_{3\times 3}\\
\text{diag}([\frac{1}{I_{xx}},\frac{1}{I_{yy}},\frac{1}{I_{zz}}])
\end{bmatrix},
\label{Sim_Example2}
\end{align}
where $I_{xx}=4.856\times 10^{-3}$, $I_{yy}=4.856\times 10^{-3}$, and $I_{zz}=8.801\times 10^{-3}$ are the moments of inertia. The state vector is defined as $\mathbf{x}_e=[\phi,\,\theta,\,\psi,\,\dot{\phi},\,\dot{\theta},\,\dot{\psi}]^{\top}$, where $\phi$, $\theta$, $\psi$ represent the the roll, pitch, and yaw orientation of the vehicle. The state vector was initialized at $\mathbf{x}_e(0)=[1.5,\,1.7,\,1.8,\,0,\,0,\,0]^{\top}$, and the expert's cost function was defined by $\mathbf{R}=\text{diag}([1,\;1,\;1])$ and $Q(\mathbf{x}_e)=\mathbf{x}_e^{\top}\bar{\mathbf{Q}}\mathbf{x}_e$, where the matrix $\bar{\mathbf{Q}}$ is given by
\begin{align}
\bar{\mathbf{Q}}=\begin{bmatrix}
3 & 0 & 0 & 0.2 & 0 & 0\\
0 & 3 & 0 & 0 & 0.2 & 0\\
0 & 0 & 3 & 0 & 0 & 0.2\\
0.2 & 0 & 0 & 1 & 0 & 0\\
0 & 0.2 & 0 & 0 & 1 & 0\\
0 & 0 & 0.2 & 0 & 0 & 1
\end{bmatrix}.
\end{align}
Using a forward RL method and considering $\boldsymbol{\sigma}_V(\mathbf{x}_e)=[\phi^2,\,\theta^2,\,\psi^2,\,\dot{\phi}^2,\,\dot{\theta}^2,\,\dot{\psi}^2,\,\phi\dot{\phi},\,\theta\dot{\theta},\,\psi\dot{\psi}]^{\top}$, along with the fact that $\mathbf{g}$ is a constant matrix, we obtained the optimal gain $\mathbf{K}_e$ for the given cost function as follows:
\begin{equation}
\mathbf{K}_e=\begin{bmatrix}
1.732 & 0 & 0 & 1 & 0 & 0\\
0 & 1.732 & 0 & 0 & 1 & 0\\
0 & 0 & 1.732 & 0 & 0 & 1
\end{bmatrix}.
\label{K_e_Sim2}
\end{equation}

For the estimation, the initial cost terms were set as $\mathbf{R}_l=\text{diag}([0.8,\,0.5,\,0.7])$ and $Q_l(\cdot)=\mathbf{x}_l^{\top}\bar{\mathbf{Q}}_l\mathbf{x}_l$, with $\bar{\mathbf{Q}}_l=\text{diag}([1,\,1,\,1,\,1,\,1,\,1])$. We used an RL approach to estimate $\mathbf{W}_{Vl}$ and $\mathbf{K}_l$ corresponding to the initial cost function, obtaining $\mathbf{W}_{Vl}=[1.0043,\,1.0034,\,1.0074,\allowbreak\,0.0044,\,0.0034,\,0.0074,\,0.0087,\,0.0069,\,0.0147]^{\top}$. Using the initial value of $\mathbf{W}_{Vl}$ and considering $\boldsymbol{\sigma}_V$, we then obtained the initial value for $\mathbf{W}_{ul}$ as
\begin{align}
\mathbf{W}_{ul}=\begin{bmatrix}
2.0086 & 0 & 0 & .0087 & 0 & 0\\
0 & 2.0068 & 0 & 0 & .0069 & 0\\
0 & 0 & 2.0068 & 0 & 0 & .0147\\
.0087 & 0 & 0 & .0088 & 0 & 0\\
0 & .0069 & 0 & 0 & .0068 & 0\\
0 & 0 & .0147 & 0 & 0 & .0148
\end{bmatrix}.
\end{align}
The update gains were set as $\alpha_V=\alpha_R=10^{-6}$. The convergence of $\mathbf{W}_{Vl}$ to an equivalent value, and $\mathbf{K}_l$ to $\mathbf{K}_e$ is shown in Fig.~\ref{K_L_Sim2}.

The estimated parameters were used to calculate $\mathbf{W}_{Ql}$ by considering $\boldsymbol{\sigma}_Q=[\phi^2,\,2\phi\theta,\,2\phi\psi,\,\cdots,\,\theta^2,\,2\theta\psi,\cdots,\,\dot{\psi}^2]^{\top}\in\mathbb{R}^{21}$, which includes all distinct quadratic terms of the components of $\mathbf{x}_e$, incorporating both squared terms and cross-products. We obtained
\begin{align}
\mathbf{W}_{Ql}=
&[2.3829,\;0.0009,\;0.0010,\;0.3716,\;0.0006,\;0.0007,\nonumber\\
&\;1.4961,\;0.0009,\;-0.0009,\;-0.1405,\;0.0015,\nonumber\\
&\;2.0949,\;-0.0006,\;0.0008,\;0.2021,\;0.7817,\nonumber\\
&\;0.0003,\;0.0007,\;0.4892,\;-0.0004,\;0.6768]^{\top}.
\end{align}
To verify the estimated parameters, we applied them in a forward RL algorithm, and the estimated control gain, denoted by $\hat{\mathbf{K}}_l$, is as follows:
\begin{align}
&\hat{\mathbf{K}}_l=\nonumber\\
&\begin{bmatrix}
1.7262 & -.0007 & .0002 & .9971 & -.0001 & .0006\\
0.0028 & 1.7303 & .0035 & .0012 & .9976 & .0005\\
0.0010 & .0008  & 1.7303 & .0005 & -.0001 & .9991
\end{bmatrix},
\end{align}
which is very close to the true control gain (\ref{K_e_Sim2}). 
\begin{figure}
	\centering
	\includegraphics[scale=.44]{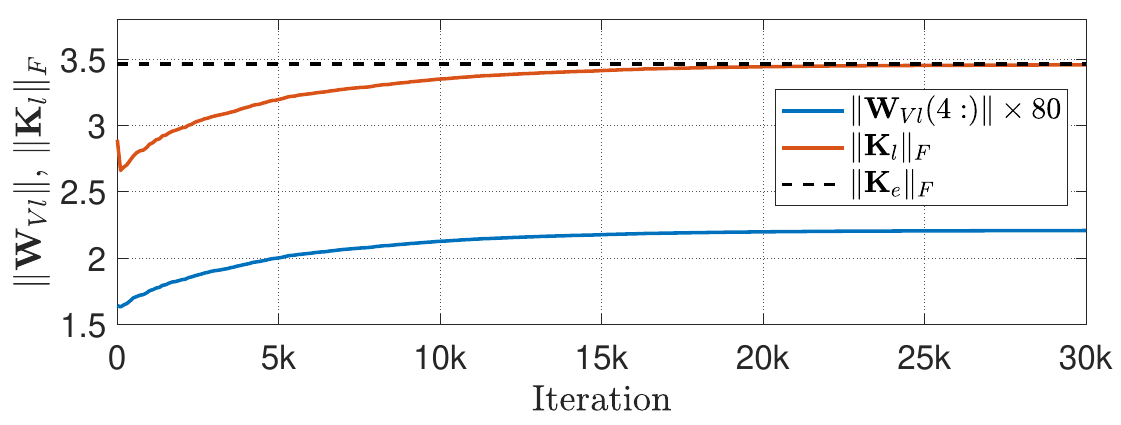}
	\caption{Evolution of the norm of the control gain $\mathbf{K}_l$ and the parameter vector $\mathbf{W}_{Vl}$ for the system (\ref{Sim_Example2}).}
	\label{K_L_Sim2}
\end{figure}

For the MuJoCo simulations, we utilized a quadrotor model from the \texttt{gym\_rotor} repository \cite{gym_rotor}, which defines the physical properties and dynamics of the aerial vehicle.

The state vector was initialized at $\mathbf{x}_e(0)=[0.287,\,0.268,\,0.427,\,0,\,0,\,0]^{\top}$, and the expert's cost function was defined the same as the previous simulation. We solve the forward problem for this system and the approximate control gain obtained as follows:
\begin{equation}
\mathbf{K}_e=\begin{bmatrix}
1.732 & -.005 & -.021 & .434 & -0.006 & -0.016\\
0.007 & 1.740 & .038 & .004 & .435 & .001\\
.021 & -.026 & 1.725 & .016 & -.001 & .453
\end{bmatrix}.
\end{equation}
For the learner system, we assigned the initial cost function the same as the previous simulation and estimated the initial $\mathbf{W}_{Vl}$ as $\mathbf{W}_{Vl}=[1.3572,\,1.26271,\,1.3079,\allowbreak\,0.0069,\,0.0068,\,0.0141,\,0.0277,\,0.0222,\,0.0546]^{\top}$. After the convergence of $\mathbf{W}_{Vl}$ and $\mathbf{K}_{l}$, we estimated $\mathbf{W}_{Ql}$ as follows:
\begin{align}
\mathbf{W}_{Ql}=
&[2.4589,\;.0264,\;-.0318,\;.6202,\;.0120,\;.0277,\nonumber\\
&\;1.3965,\;.1488,\;-.0113,\;-.1378,\;-.1153,\nonumber\\
&\;2.0949,\;.0016,\;-.0926,\;.4277,\;.7244,\nonumber\\
&\;.0010,\;-.0109,\;.4257,\;-.0012,\;.6971]^{\top}.
\end{align}
We utilized the estimated cost function to obtain the optimal policy using a model-free RL method. The evolution of the norm of the control policy and its convergence to the optimal one is shown in Fig.~\ref{Mujoco_Photo}.
\begin{figure}
	\centering
	\includegraphics[scale=.52]{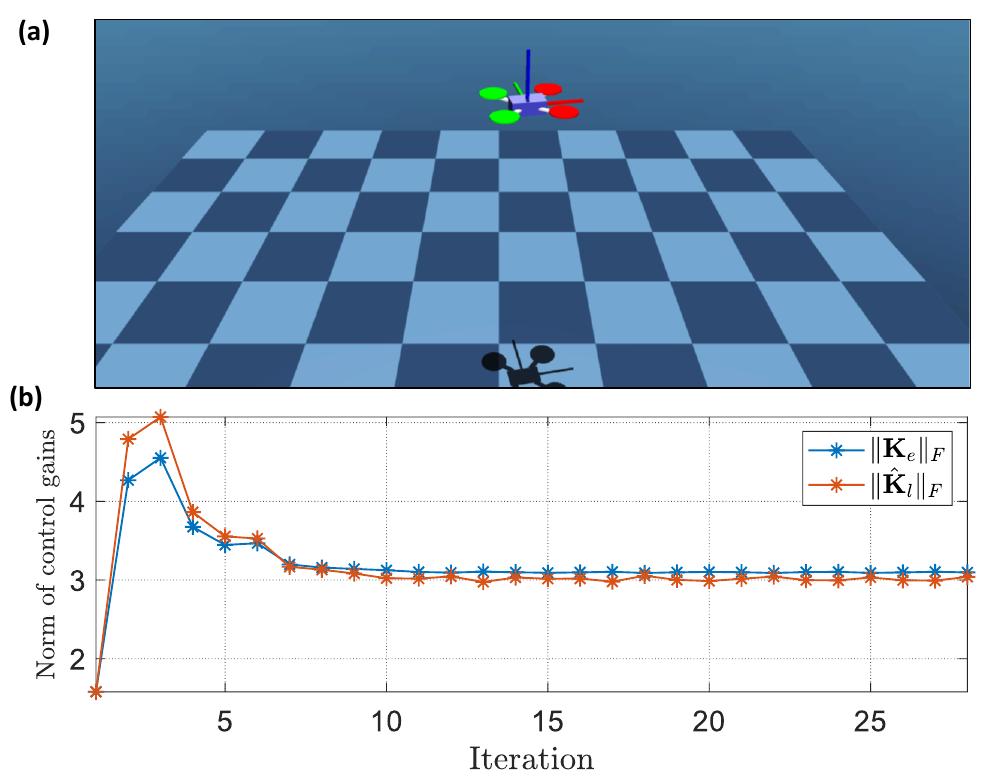}
	\caption{(a) MuJoCo environment used for the quadrotor simulation. (b) Evolution of the norm of control policies estimated through model-free RL using the expert's original cost function and the estimated cost function.}
	\label{Mujoco_Photo}
\end{figure}

\subsection{Noise and uncertainty analysis}
In this study, we evaluated the performance of Algorithm~\ref{Alg} in the presence of measurement noise and errors in the initial estimates of $\mathbf{W}_{Vl}$ and $\mathbf{K}_l$. We also compared the results with the IRL method presented in \cite{SELF2022110242} under the same conditions. To analyze the robustness of the algorithms under these conditions, we measured the difference between the true optimal policy and the policy obtained from the estimated cost function. To reduce the impact of potential errors in finding the optimal policy from a given cost function, an issue common in nonlinear systems, for a better analysis and comparison of robustness, we considered a linear system with dynamics defined by the following terms:
\begin{equation}
\mathbf{f}=
\begin{bmatrix}
1 & 0\\
0 & -2
\end{bmatrix}\mathbf{x}_e,\quad
\mathbf{g}=\begin{bmatrix}
1\\
1
\end{bmatrix}.
\label{Sim_Example3}
\end{equation}
The state vector was initialized at $\mathbf{x}_e(0)=[-0.5,\;0.5]^{\top}$. The expert's cost function was defined with $R=1$ and $Q(\mathbf{x}_e)=\mathbf{x}_e^{\top}\bar{\mathbf{Q}}\mathbf{x}_e$, where $\bar{\mathbf{Q}}=\text{diag}([1,\;0.6])\in\mathbb{R}^{2\times 2}$. The initial cost weights were set as $R_l=0.8$ and $\mathbf{W}_{Ql}=[0.5,\;0,\;1.3]^{\top}$ with $\boldsymbol{\sigma}_Q=[x_{e1}^2,\;2x_{e1}x_{e2},\;x_{e2}^2]^{\top}$. For this system, the optimal control gain was obtained as $\mathbf{K}_e=[2.4877,\;0.0429]^{\top}$.

To simulate measurement noise, uniform random noise was added to the states and input of the expert, with the noise amplitude limited to $\pm 3\%$ of the signal values at each time step. We used MATLAB's \texttt{rand} function to generate a separate random number for each signal and applied these signals for estimation without using any filter. To model uncertainty in the initial estimates of $\mathbf{W}_{Vl}$ and $\mathbf{K}_l$, we introduced uncertainty in the input dynamics of the system $\mathbf{g}$ when solving the forward problem in Step~\ref{Step_initialize_2} of the algorithm to obtain the initial parameters. This approach to handling uncertainty provided a fair framework for comparing our algorithm with the model-based approach in \cite{SELF2022110242}. In simulating the method from \cite{SELF2022110242}, we assumed that the drift dynamics $\mathbf{f}$ were correctly known and used the same noisy signals as in our method.

After estimating the cost function, we solved the forward problem to calculate the error between the true control gain and the one obtained from the estimated cost function. The normalized policy errors obtained using Algorithm~\ref{Alg} and the method from \cite{SELF2022110242} for different levels of uncertainty in $\mathbf{g}$ are shown in Fig.~\ref{Noise_error}. The uncertainty in $\mathbf{g}$ was introduced by increasing its true value by up to $10\%$. Note that the noise signals were regenerated for each new uncertainty in $\mathbf{g}$. The results in Fig.~\ref{Noise_error} demonstrate that both methods exhibit comparable sensitivity to noise and uncertainty, while the model-based approach, as noted in \cite{SELF2022110242} and \cite{ASL_ESWA2024}, is limited to IRL problems that admit unique solutions up to a scaling factor.
 \begin{figure}
	\centering
	\includegraphics[scale=.44]{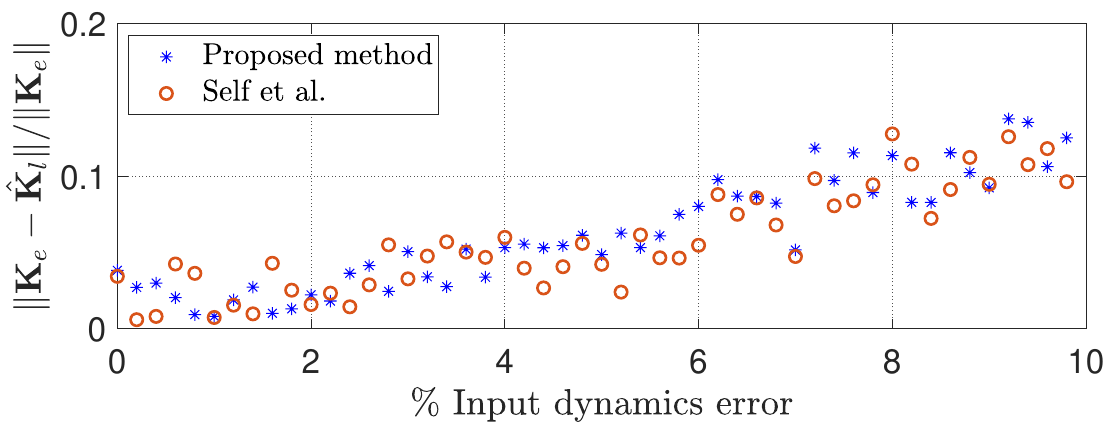}
	\caption{Comparison of normalized policy errors between our proposed method and the approach in \cite{SELF2022110242} for different levels of uncertainty in the input dynamics $\mathbf{g}$, using different noisy signals in each experiment. $\hat{\mathbf{K}}_l$ represents the policy gain obtained from the estimated cost function.}
	\label{Noise_error}
\end{figure}

\section{Conclusion}
This paper introduced novel solutions for addressing inverse optimal control in continuous-time nonlinear systems. By employing a dual approach that combines gradient descent techniques for estimating one set of parameters and the Hamilton-Jacobi-Bellman equation for estimating another, our algorithms effectively overcome the issue of the finite informativity condition, which may not be satisfied for certain systems in conventional approaches. Additionally, they reduce complexity by eliminating the need for repetitively solving a forward optimal control problem. In fact, the proposed model-free algorithm requires solving the forward problem only once during initialization, and we showed that for systems with known input dynamics, this step can be bypassed entirely. Through extensive simulations under both noise-free and noisy conditions, we demonstrated the practicality of our approach in estimating the cost function of nonlinear systems. The demonstrated effectiveness suggests promising prospects for deployment in real-world scenarios, particularly in the realm of autonomous systems and robotics. Future research could focus on refining the approach's applicability across a wider range of practical contexts. 

\section*{Acknowledgments}
This study is based on results obtained from a project (JPNP20006) commissioned by the New Energy and Industrial Technology Development Organization (NEDO).

\bibliographystyle{IEEEtran}
\bibliography{bib_IOC}


\end{document}